 \newcommand{\beqn}{\begin{eqnarray}}
 \newcommand{\eeqn}{\end{eqnarray}}
 \newcommand{\be}{\begin{equation}}
 \newcommand{\ee}{\end{equation}}
 \newcommand{\ba}{\begin{array}}
 \newcommand{\ea}{\end{array}}
\newcommand{\br}{\begin{remark}}
 \newcommand{\er}{\end{remark}}
\newcommand{\bc}{\begin{cor}}
 \newcommand{\ec}{\end{cor}}
 \newcommand{\re}{\ref}
 \newcommand{\ci}{\cite}
 \newcommand{\ds}{\displaystyle}
 \newcommand{\la}{\label}
 \newcommand{\supp}{{\rm supp~}}
\newcommand{\fr}{\frac}
\newcommand{\eps}{\epsilon}
\newcommand{\ti}{\tilde}
\newcommand{\cH}{{\cal H}}
\newcommand{\cO}{{\cal O}}
\newcommand{\ve}{\varepsilon}
\newcommand{\vp}{\varphi}
\newcommand{\de}{\delta}
\newcommand{\al}{\alpha}
\newcommand{\ga}{\gamma}
\newcommand{\si}{\sigma}
\newcommand{\om}{\omega}
\newcommand{\lam}{\lambda}
\newcommand{\ka}{\kappa}
\newcommand{\N}{\mathbb{N}}
\newcommand{\R}{\mathbb{R}}
\newcommand{\nc}{\newcommand}
\nc{\dps}{\displaystyle}
\nc{\RR}{\mbox{\rm I$\!$R}}
\newcommand{\bd}{\begin{defin}}
 \newcommand{\ed}{\end{defin}}
\newcommand{\bt}{\begin{theorem}}
 \newcommand{\et}{\end{theorem}}
\newcommand{\bqt}{\begin{qtheorem}}
 \newcommand{\eqt}{\end{qtheorem}}
\newcommand{\bl}{\begin{lemma}}
 \newcommand{\el}{\end{lemma}}
 \newcommand{\bce}{\begin{center}}
 \newcommand{\ece}{\end{center}}
\newcommand{\bex}{\begin{example}}
 \newcommand{\eex}{\end{example}}
\newcommand{\bexs}{\begin{examples}}
 \newcommand{\eexs}{\end{examples}}
\newcommand{\bexe}{\begin{exercice}}
 \newcommand{\eexe}{\end{exercice}}
\newcommand{\brs}{\begin{remarks}}
 \newcommand{\ers}{\end{remarks}}
\newtheorem{theorem}{Theorem}[section]
\newtheorem{qtheorem}{QTheorem}[section]
\newtheorem{defin}[theorem]{Definition}
\newtheorem{lemma}[theorem]{Lemma}
\newtheorem{remark}[theorem]{Remark}
\newtheorem{remarks}[theorem]{Remarks}
\newtheorem{cor}[theorem]{Corollary}
\newtheorem{pro}[theorem]{Proposition}
\newcommand{\bp}{\begin{pro}}
\newcommand{\ep}{\end{pro}}
\newcommand{\const}{\mathop{\rm const}\nolimits}
\newcommand{\sgn}{\mathop{\rm sgn}\nolimits}
\begin{document}
%%%%%%%%%%%%%%%%%%%%%%%%%%%%%%%%%%%%%%%%%%%%%%%%%%%%%%%%%%%%%%%%%%%%%%%%%%%%%%%%%%%%%
\begin{titlepage}
\bigskip\bigskip\bigskip

\begin{center}
{\Large\bf
On nonlinear wave equations with
\\~\\
parabolic potentials
}
\vspace{1cm}
 \bigskip\bigskip\\
{\large A.I. Komech}
\footnote{
Supported partly
by Alexander von Humboldt Research Award, and the grants of
DFG, FWF and RFBR.}\\
{\it Fakult\"at f\"ur Mathematik, Universit\"at Wien\\
and Institute for the Information Transmission Problems RAS
}\\
 e-mail:~alexander.komech@univie.ac.at
\medskip\\
{\large E.~A.~Kopylova}
\footnote{Supported partly by the 
Austrian Science Fund (FWF): M1329-N13,
and RFBR grants.}\\
{\it Fakult\"at f\"ur Mathematik, Universit\"at Wien\\
and Institute for Information Transmission Problems RAS}\\
e-mail:~elena.kopylova@univie.ac.at
\medskip\\
{\large S.~A.~Kopylov}
\\
{\it Russian State University of Tourism and Service }\\
e-mail:~covector@yandex.ru 
\end{center}
\date{24 February 2012}

%%%%%%%%%%%%%%%%%%%%%%%%%%%%%%%%%%%%%%%%%%%%%%%%%%%%%%%%%%%%%%%%%%%%%%%%%%%%%%%%%%
\vspace{1cm}
\begin{abstract}
We introduce a new class of piece-wise quadratic
potentials for  nonlinear wave equations
with a kink solutions.
The potentials allow an exact description
of the spectral properties
for the linearized equation at the kink.
This description is necessary for the
study of the stability
properties of the kinks.

In particular, we construct  examples of the potentials
of Ginzburg-Landau type providing the
asymptotic stability of the kinks \ci{KK1,KK2}.

\end{abstract}

{\it Keywords}: Relativistic invariant nonlinear wave equations,
asymptotic
stability, soliton, kink,
Fermi Golden
Rule.

{\it 2000 Mathematical Subject Classification}: 35Q35, 37K40

\end{titlepage}
%%%%%%%%%%%%%%%%%%%%%%%%%%%%%%%%%%%%%%%%%%%%%%%%%%%%%%%%%%%%%%%%%%%%%%%%%%%%%%%%%%

%%%%%%%%%%%%%%%%%%%%%%%%%%%%%%%%%%%

\setcounter{equation}{0}
\section{Introduction}
%%%%%%%%%%%%%%%%%%%%%%%%%%%%%%%%%%%%%%%%%%%%%%%%%%%%%%%%%%%%%%%%%%%%%%%%%%%%%
Last two decades there was an outstanding  activity in the
field of asymptotic stability of solitary waves
for nonlinear Schr\"odinger equations
\ci{BP2, BS2003,Sigal93, SW1,SW2,SW04,TY02,Tsai2003,W85},
nonlinear Klein-Gordon equations
\ci{IKV05,SW99}, relativistic Ginzburg-Landau equations
\ci{KK1,KK2}, and other Hamiltonian PDEs
  \ci{MW96,PW94}. All these results rely on  different
assumptions on the spectral properties
of the corresponding linearized equations.
On the other hand, the  examples were mostly
unknown. Here we construct a model nonlinear
wave equations, providing different spectral properties:
given number of the eigenvalues, absence of the resonances,
and Fermi Golden Rule.

In particular, we construct the examples of
relativistic Ginzburg-Landau equations
providing all properties assumed in
\ci{KK1,KK2}.
\medskip

We consider real solutions to 1D nonlinear Ginzburg-Landau equations
\be\la{Se}
\ddot\psi(x,t)=\psi''(x,t)+F(\psi(x,t)),~~~~~x\in\R
\ee
where  $F(\psi)=-U'(\psi)$.
We assume the following condition.
\medskip\\
{\bf Condition U1} {\it
For some $K>3$ and $m>0$ the potential $U(\psi)$
is smooth even function satisfying}
\beqn\nonumber
&&U(\psi)>0~~~~\mbox{for}~~~\psi\ne a,\\
\la{U11}\\
\nonumber
&&
U(\psi)=\fr{m^2}2 (\psi\mp a)^2+{\cal O}(|\psi\mp a|^{K}),~~
\psi\to\pm a.\la{U12}
\eeqn
The corresponding stationary equation reads
\be\la{ssol}
s''(x)-U'(s(x))=0,~~~~~~x\in\R.
\ee
Constant stationary solutions are: $\psi(x)\equiv 0$
and  $\psi(x)\equiv \pm a$. There are also the ``kinks'', i.e.
nonconstant finite energy solutions $s(x)$
to (\re{ssol})
such that
$$
s(x)\to \pm a,~~~~x\to\pm\infty
$$
Condition {\bf U1} implies that
$(s(x) \mp a)''\sim m^2(s(x) \mp a)$ for
$x\to\pm\infty$,
hence
\be\la{kina}
|s(x) \mp a|\sim Ce^{-m|x|},~~~~~x\to\pm\infty.
\ee
Due to relativistic invariance of
equation (\re{Se}) the  moving kinks
$$
s_{q,v}(x,t)=s(\ka(x-vt-q)),\quad q,v\in\R,\quad  |v|<1,
\quad\ka=1/\sqrt{1-v^2}
$$
also are the solutions to (\re{Se}).
Let us linearize equation (\re{Se}) at the kink $s(x)$.
Substituting $\psi(x,t)=s(x)+\phi(x,t)$,
we obtain formally
$$
  \ddot\phi(x,t)=-H\phi(x,t)+\cO(|\phi(x,t)|^2),
$$
where $H$ is  the Schr\"odinger operator
$$
  H:=-\fr{d^2}{dx^2}+m^2+V(x)
$$
with the potential
$$
  V(x)=-F'(s(x))-m^2=U''(s(x))-m^2.
$$
The condition {\bf U1} and the asymptotics (\re{kina}) imply that
$$
|V(x)|=\cO(|s(x) \mp a|^{K-1})
\sim Ce^{-(K-1)m|x|},~~~~~x\to\pm\infty.
$$
The next properties of $H$ are  valid:
\medskip\\
{\bf H1.} The continuous spectrum of  $H$ is
$\si_cH=[m^2,\infty)$.
\medskip\\
{\bf H2.} The point $\lam_0=0$ belongs to the discrete
spectrum, and corresponding eigenfunction is $s'(x)$.
\medskip\\
{\bf H3.} Since $s'(x)>0$, the point $\lam_0=0$ is
the  groundstate, and all remaining discrete spectrum
is contained in
$(0,m^2]$.
\medskip
\\
To establish an asymptotic stability of the kinks $s_{q,v}(x,t)$
one need  certain  spectral properties of $H$
(cf. \ci{KK1}, \ci{KK2}):
\\\\
{\bf Condition U2} 
{\it The edge point $\lam=m^2$ of the continuous spectrum
is neither eigenvalue nor resonance.
}
\medskip\\
{\bf Condition U3}
{\it The discrete spectrum of $H$ consists of two points: 
$\lam_0=0$ and $\lam_1\in (0,m^2)$ satisfying}
\be\la{U2}
  4\lam_1>m^2.
\ee
We assume also a non-degeneracy condition known as
``Fermi Golden Rule'' meaning  the strong coupling
of the nonlinear term to  the continuous
spectrum. This coupling provides the energy radiation to infinity
(cf. condition (10.0.11) in \ci{BS2003} and condition (1.11)
in \ci{KK2}).
\medskip\\
{\bf Condition U4}
{\it The inequality holds}
\be\la{FGR}
  \int\vp_{4\lam_1}(x)F''(s(x))\vp_{\lam_1}^2(x)dx\ne 0.
\ee
{\it where $\vp_{4\lam_1}$ is the nonzero odd solution to}
$H\vp_{4\lam_1}=4\lam_1\vp_{4\lam_1}$.
\medskip\\
Note that the known quartic double well  Ginzburg-Landau potential  
$U_{GL}(\psi)=(\psi^2-a^2)^2/(4a^2)$ satisfies condition
{\bf U1} with $m^2=2$ and $K=3$ 
as well as conditions {\bf U3}-{\bf U4}.
However, there exist the  resonance 
for the corresponding operator $H$ at the edge point $\lam=m^2$. 
Hence, the  asymptotic stability of the kinks 
for  $U_{GL}$ is the open problem.

Our main result
is the following theorem.

\begin{theorem}\la{t1}
There exist potentials $U(\psi)$
satisfying  conditions {\bf U1}-{\bf U4}.
\end{theorem}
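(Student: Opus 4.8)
The plan is to work backwards from the Schr\"odinger operator $H=-d^2/dx^2+m^2+V$ to the potential $U$, exploiting the idea announced in the title: choose $U$ to be a \emph{piecewise quadratic} even function of $\psi$. Then $U''(\psi)$ is piecewise constant, and since the kink $s(x)$ is monotone, $V(x)=U''(s(x))-m^2$ becomes a finite \emph{step potential} in $x$, whose spectral and scattering data are completely explicit. Concretely, I would fix $U$ with minima $0$ at $\psi=\pm a$, take the outermost quadratic piece to have curvature exactly $m^2$ (so that $V\equiv 0$ for large $|x|$, forcing the decay (\re{kina}) and {\bf H1}), and glue finitely many quadratic pieces with breakpoints $0=b_0<b_1<\cdots<b_n=a$ and curvatures $\mu_j$, imposing $C^1$-matching (continuity of $U$ and $U'$) at each breakpoint so that $V$ has only finite jumps and no $\delta$-terms.

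Next I would recover the kink and the ground state from the first integral. Multiplying (\re{ssol}) by $s'$ gives $\frac12 s'(x)^2=U(s(x))$, so $s$ is obtained piecewise by quadrature and is explicit on each interval (an exponential tail near $a$ since the outer curvature is $m^2$, a trigonometric profile where $\mu_j<0$). Differentiating (\re{ssol}) shows $Hs'=0$, so {\bf H2}--{\bf H3} hold automatically: $\phi_0=s'>0$ is the zero-energy ground state for \emph{any} admissible $U$, and this is precisely why $\lambda_0=0$ is structural and will survive the later smoothing. On each piece $\phi_0$ is a combination of $e^{\pm\sqrt{\mu_j}\,x}$ (for $\mu_j>0$) or $\cos/\sin$ (for $\mu_j<0$), matched through the transfer matrix of the step potential.

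With this step structure the remaining conditions become finite algebraic/transcendental relations among the parameters $(b_j,\mu_j)$ and the piece widths. Counting the sign changes of the Jost/transfer data gives the number of bound states, so {\bf U3} reduces to tuning the well so that there are exactly two eigenvalues with the excited one in the window $m^2/4<\lam_1<m^2$. Condition {\bf U2} is the non-vanishing at $\lam=m^2$ (i.e. $k=0$) of the Wronskian/transfer determinant governing bounded solutions, an explicit nonzero condition. For {\bf U4}, the $C^1$ piecewise-quadratic choice makes $F''(s)=-U'''(s)$ a finite sum of $\delta$'s located where $s$ crosses the breakpoints, so the integral (\re{FGR}) collapses to a finite sum $\sum_k \gamma_k\,\varphi_{4\lam_1}(x_k)\varphi_{\lam_1}^2(x_k)$, which one evaluates from the explicit eigenfunctions and arranges to be nonzero. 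Since {\bf U2}, the strict inequalities in {\bf U3}, and the non-vanishing in {\bf U4} are all open conditions, it suffices to exhibit one explicit parameter choice (a well with two or three pieces) satisfying them, which then persists on an open set.

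Finally I would upgrade $C^1$ to the $C^\infty$ demanded by {\bf U1} by mollifying $U$ on arbitrarily small neighborhoods of the breakpoints. This is a small perturbation of $V$ in the weighted norms controlling resonances: the count of bound states and $\lam_1$ are stable by openness, $\lambda_0=0$ is preserved exactly because $s'$ remains the zero mode of the new operator, the edge condition {\bf U2} is preserved by continuity of the transfer determinant, and the FGR integral varies continuously and stays nonzero. The main obstacle, I expect, is {\bf U2} versus {\bf U3}: the standard Ginzburg--Landau potential has precisely an edge resonance at $\lam=m^2$, and such a resonance is exactly the borderline at which a bound state is about to be created at or absorbed into the continuum. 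Deforming the well so as to move this resonance off the edge while keeping the second eigenvalue inside the narrow window $(m^2/4,m^2)$ and not producing a spurious third eigenvalue is the delicate trade-off; the explicit transfer-matrix formulas afforded by the piecewise-quadratic construction are what make this balancing act checkable by hand.
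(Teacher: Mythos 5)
Your construction is the paper's own: a $C^1$ piecewise-quadratic double well whose linearization at the kink is an explicit step potential, the collapse of the Fermi Golden Rule integral (\ref{FGR}) to point evaluations because $U'''(s(\cdot))$ is a finite sum of deltas, and a final mollification under which all four conditions persist (openness and continuity, with $\lambda_0=0$ surviving structurally as the zero mode $s'$ of the perturbed kink). So there is no divergence of method to report; the issue is completeness.

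The genuine gap is that you never produce the parameter values whose existence is the theorem. Your argument reduces everything to ``exhibit one explicit parameter choice \ldots\ which then persists on an open set,'' but openness only propagates a good example, it cannot create one: the set of parameters satisfying {\bf U2}--{\bf U4} simultaneously could a priori be empty, and you flag the danger yourself ({\bf U2} versus {\bf U3} versus the appearance of a third eigenvalue). Resolving exactly this tension is the content of the paper's Proposition \ref{p1}. There, the minimal three-piece family after $C^1$ matching has a single free parameter $\gamma\in(0,1)$ (with $b=1/\gamma$, $d=1/(1-\gamma)$), and one must prove: (a) exactly two eigenvalues and no edge resonance occur precisely for $\gamma\in(\gamma_1,\gamma_2)$, with $\gamma_k$ defined by (\ref{gak}); (b) the inequality $4\lambda_1>d$ of condition {\bf U3} holds on \emph{all} of $(\gamma_1,\gamma_2)$, which requires a monotonicity argument plus the numerical bound $\alpha\approx 0.9215>\gamma_2\approx 0.8579$; and (c) the FGR quantity $\varphi_{4\lambda_1}(q)$ vanishes at exactly one point $\gamma_*\approx 0.7925$ of the interval, proved by a slope comparison of two transcendental curves, (\ref{d1})--(\ref{d4}). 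None of these outcomes is automatic, and in your more general multi-piece family the analogous transcendental bookkeeping would only be heavier, not lighter. Without carrying out (a)--(c), or an equivalent explicit verification, the existence claim is not established; with them, your outline becomes precisely the paper's proof.
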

%%%%%%%%%%%%%%%%%%%%%%%%%%%%%%%%%%%%%%%%%%%%%%%%%%%%%%%%%%%%%%%%%%%%%%%%%%
\setcounter{equation}{0}
\section{Piece wise parabolic potentials}
%%%%%%%%%%%%%%%%%%%%%%%%%%%%%%%%%%%%%%%%%%%%%%%%%%%%%%%%%%%%%%%%%%%%%%%%%
As a first step, we will consider the class of the potentials which are
piece-wise second order polynomials.
\be\la{pp}
U_0(\psi)=\left\{
\ba{ll}
\ds\fr12-\fr b2 \psi^2,& |\psi|\le \ga\\
\\
\ds\fr d2 (\psi\mp 1)^2,& \pm\psi\ge \ga
\ea
\right.
\ee
with some constants $b,d, \ga>0$. 
Let us find the parameters $\ga,b,d$ providing $U_0(\psi)\in C^1(\R)$.
We have
$$
U_0( \ga)=\fr 12 -\fr b2 \ga^2=\fr d2(\ga-1)^2, ~~~~~~
U_0'( \ga)=-b \ga=d( \ga-1).
$$
Solving the equations, we obtain
\be\la{ab}
b=\fr 1 \ga,~~~~~~~d=\fr 1{1- \ga},~~~~~~~~~~~~~0< \ga<1.
\ee
Then the functions
$U_0''(\psi)$ are piece-wise constant
with the jumps at the points $\psi=\pm \ga$.
Thus, the potentials $U_0\in C^1(\R)$ form
one-dimensional manifold parametrized by $\ga\in (0,1)$.
%%%%%%%%%%%%%%%%%%%%%%%%%%%%%%%%%%%%%%%%%%%%%%%%%%%%%%%%%%%%%%%%%%%%
\subsection{Kinks }
%%%%%%%%%%%%%%%%%%%%%%%%%%%%%%%%%%%%%%%%%%%%%%%%%%%%%%%%%%%%%%%%%%%%
Let us solve the equation of type (\re{ssol}) for the kink
in the case of potential (\re{pp}):
\be\la{ssol0}
s_0''(x)-U_0'(s_0(x))=0,~~~~~~x\in\R.
\ee
We search an odd solution to
$$
s_0''(x)=\left\{\ba{ll}
-bs_0(x),&0<s_0(x)\le \ga,
\\\\
d(s_0(x)-1),& s_0(x) > \ga.
\ea\right.
$$
We have
\be\la{kin}
s_0(x)=\left\{\ba{ll}
C\sin\sqrt {b}x,&0<x\le q,
\\\\
Ae^{-\sqrt{d}x}+1,& x > q,
\ea\right.
\ee
where $C> \ga$, $A<0$, $q=\ds\fr{1}{\sqrt b}\arcsin\frac{ \ga}{C}$.
Equating the values of $s(x)$ and its derivative at $x=q$ we obtain
\be\la{ss}
\left\{\ba{rl}
Ae^{-\sqrt{d}q}+1=C\sin\sqrt {b}q=\ga,
\\\\
-\sqrt{d}Ae^{-\sqrt{d}q}=\sqrt {b}C\cos\sqrt {b}q.
\ea\right.
\ee
The first line of (\re{ss}) implies $Ae^{-\sqrt{d}q}= \ga-1$.
Hence the second line  of (\re{ss}) becomes
$$
\sqrt{d}(1- \ga)=\sqrt {b}C\cos\sqrt {b}q.
$$
The both side of the last equality is positive.
Hence it is equivalent to
$$
d(1- \ga)^2=b(C^2- \ga^2).
$$
Substituting (\re{ab}) we obtain
$1- \ga=C^2/ \ga- \ga$.
Then
$$
C=\sqrt{ \ga},\quad A=( \ga-1)e^{\sqrt{ \ga/(1- \ga)}\arcsin\sqrt \ga}
$$
and
\be\la{q-sol}
q=\sqrt\ga\arcsin\sqrt\ga.
\ee
%%%%%%%%%%%%%%%%%%%%%%%%%%%%%%%%%%%%%%%%%%%%%%%%%%%%%%%%%%%%%%%%%%%%
\subsection{Linearized equation }
%%%%%%%%%%%%%%%%%%%%%%%%%%%%%%%%%%%%%%%%%%%%%%%%%%%%%%%%%%%%%%%%%%%%
Let us linearize equation (\re{Se}) with 
$F(\psi)=F_0(\psi)=-U_0(\psi)$ 
at the kink $s_0(x)$
splitting  the solution as the sum
\be\la{dec}
\psi(t)=s_0+\phi(t),
\ee
Substituting (\re{dec}) to (\re{Se}), we obtain
\be\la{addeq}
\ddot\phi(x,t)=\phi''(x,t)- U'_0(s_0(x)+\phi(x,t))+U'_0(s_0(x)).
\ee
By (\ref{pp}) we can write  equations (\re{addeq}) as
$$
\ddot\phi(t)=-H_0\phi(t)+{\cal N}(\phi(t)),\,\,\,t\in\R
$$
where  ${\cal N}(\phi)$ is at least quadratic in $\phi$.
and
\be\la{VV}
H_0=-\fr{d^2}{dx^2}+W_0(x),~~~~~~~~
W_0(x)=U_0''(s_0(x))=
\left\{
\ba{rl}
-b,&|x|\le q\\
\\
d,&|x| > q
\ea\right.
\ee
(see Fig. \re{Fig1}).
\begin{figure}[!ht]
\vspace{0cm}
\centering
\hspace{0cm}
\includegraphics[width=0.7\textwidth]{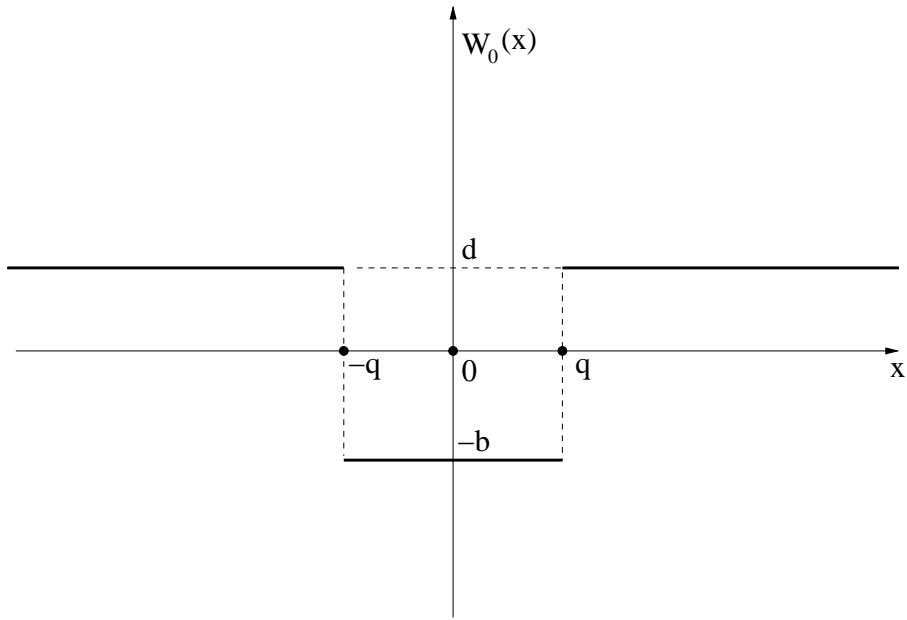}
\caption{Potential $W_0$}
\label{Fig1}
\end{figure}
%%%%%%%%%%%%%%%%%%%%%%%%%%%%%%%%%%%%%%%%%%%%%%%%%%%%%%%%%%%%%%%
%%%%%%%%%%%%%%%%%%%%%%%%%%%%%%%%%%%%%%%%%%%%%%%%%%%%%%%%%%%%%%%
\setcounter{equation}{0}
\section{Spectrum of linearized equation}
%%%%%%%%%%%%%%%%%%%%%%%%%%%%%%%%%%%%%%%%%%%%%%%%%%%%%%%%%%%%%%%
The continuous spectrum $\si_c H_0=[d,\infty)$. 
The point  $\lam_0=0$ is the groundstate
since it  corresponds to the symmetric positive
eigenfunction  $\vp_0(x)=s_0'(x)$:
$$
H_0\vp_0=-s_0'''(x)+U_0''(s_0(x))s_0'(x)=0,
$$
which follows by differentiation of (\re{ssol0}).
Therefore,  the discrete
spectrum $\si_dH_0\subset [0,d]$,
and the next eigenfunction $\vp_1(x)$ should be antisymmetric.
%%%%%%%%%%%%%%%%%%%%%%%%%%%%%%%%%%%%%%%%%%%%%%%%%%%%%%%%%%%%%%%
%%%%%%%%%%%%%%%%%%%%%%%%%%%%%%%%%%%%%%%%%%%%%%%%%%%%%%%%%%%%%%%
\subsection{Antisymmetric eigenfunctions}
%%%%%%%%%%%%%%%%%%%%%%%%%%%%%%%%%%%%%%%%%%%%%%%%%%%%%%%%%%%%%%%
The eigenfunction $\vp(x)$ corresponding to eigenvalue $\lam$
should satisfy the equation
\be\la{dei}
\left\{\ba{c}
-\vp''(x)-b\vp(x)=\lam\vp(x),~~~|x| \le q,
\\
\\
-\vp''(x)+d\vp(x)=\lam\vp(x),~~~|x| > q.
\ea\right.
\ee
Equations  (\re{dei}) imply that the antisymmetric eigenfunctions
have the form
\be\la{deif}
\vp(x)=\left\{
\ba{ll}
B\sin\beta x&\!\!\!\!\!,~~~|x| \le  q,
\\
\\
A\sgn x ~e^{-\al |x|}&\!\!\!\!\!,~~~|x| > q.
\ea\right.
\ee
where
$\al>0$, $\beta\ge 0$, and
$\al^2=d-\lam$,  $\beta^2=b+\lam$.
Let us calculate the corresponding eigenvalues $\lam$. First,
equating
the values of the eigenfunction and its first derivatives at $x=q$,
we obtain
\be\la{ABeq}
Ae^{-\al q}=B\sin\beta q,~~~~~~-A\al e^{-\al q}=B\beta\cos\beta q.
\ee
The system admits nonzero solutions if and only if its determinant vanishes:
\be\la{alb}
-\al=\beta\cot\beta q.
\ee
At last, multiplying by $q$, and denoting $\xi=\beta q$ and $\eta=\al
q$, we obtain the system
\be\la{xiet}
-\eta=\xi\cot\xi,~~~~~~\xi^2+\eta^2=R^2,
\ee
where $R=q\sqrt{b+d}$ is the radius of the circle.
Substituting $b,d$ and $q$ from  (\re{ab}) and  (\re{q-sol})
respectively, we obtain
\be\la{R2}
R=q\sqrt{\fr 1 \ga+\fr 1{1- \ga}}=\fr{q}{\sqrt{ \ga(1- \ga)}}
=\fr{\arcsin\sqrt{ \ga}}{\sqrt{1- \ga}}.
\ee
Finally, the solutions to  (\re{xiet}) can be found grafically
(see Fig. 1). 
Taking into account that $\eta> 0$, we obtain that
%%%%%%%%%%%%%%%%%%%%%%%%%%%%%%%%%%%%%%%%%%%%%%%%%%%%%%%%%%%%%%
\be\la{gras}
\left.
\ba{rrl}
  &R\in \ds(0,~\fr\pi 2]~~&\!\!\!:
~ \mbox{no solution to (\re{xiet}) }
\\
\\
  &R\in (\ds\fr\pi 2, ~\fr{3\pi}2]&\!\!\!:
~ \mbox{exactly one solution to (\re{xiet})}
 \\
\\
  &R\in (\ds\fr{3\pi} 2, \fr{5\pi}2]&\!\!\!:
~ \mbox{exactly two solution to (\re{xiet})}
\\
\\
&& ............................................................
\ea\right|
\ee
\begin{figure}[!ht]
\vspace{0cm}
\centering
\hspace{0cm}
\includegraphics[width=0.7\textwidth]{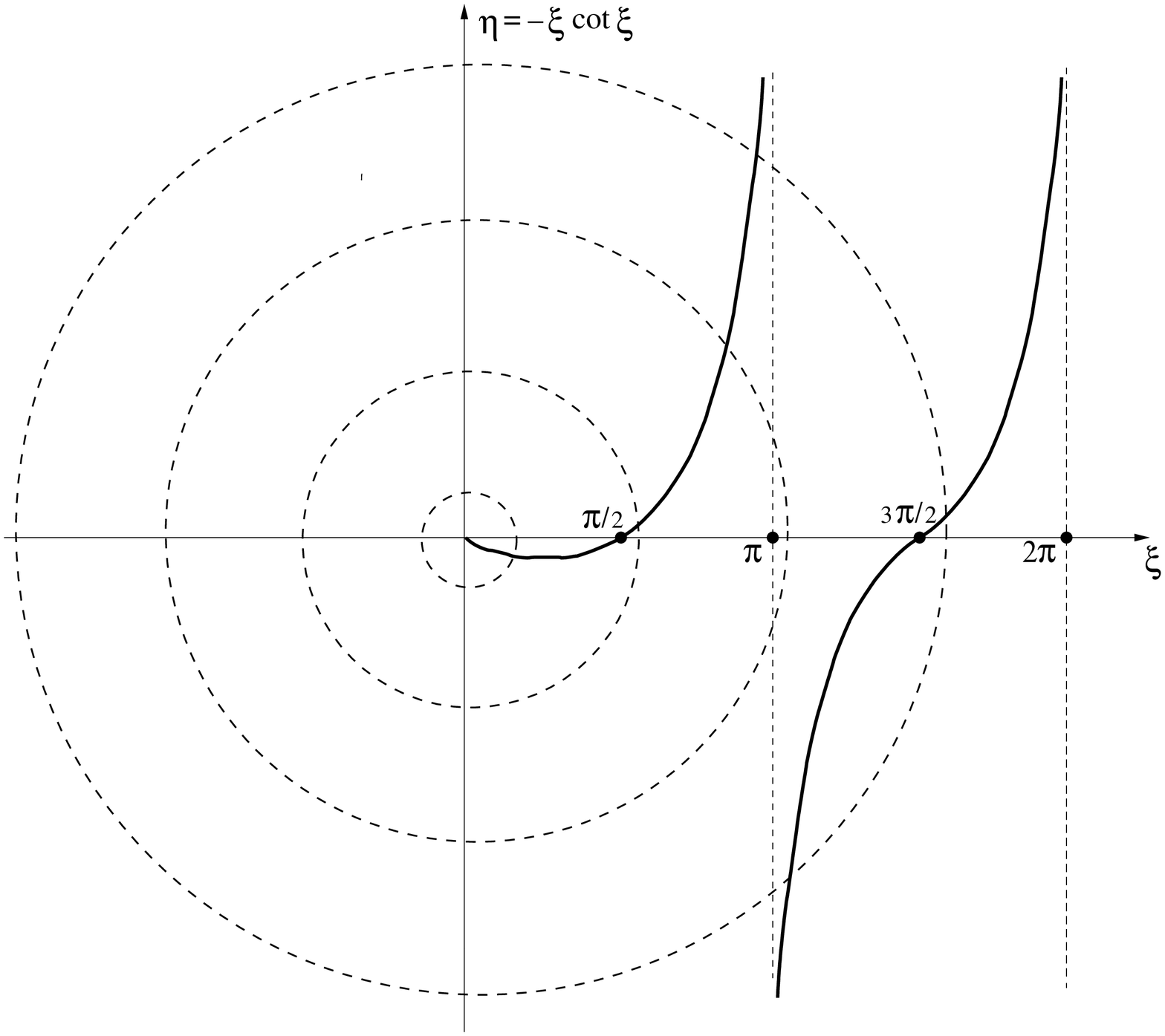}
\caption{}
\label{Picture1}
\end{figure}
Let us note that  $R(0)=0$ and $R(1)=\infty$, and
the radius $R( \ga)$ is monotone increasing on
$[0,1]$.
Denote by $ \ga_k$, $k\in\N$ the solution to the equation
\be\la{gak}
\fr{\arcsin\sqrt{ \ga_k}}{\sqrt{1- \ga_k}}=\fr{k\pi}2,\quad k\in\N.
\ee
Numerical calculations gives
\be\la{gak-sol}
\ga_1\sim 0.64643,~~\ga_2\sim 0.8579,~~\ga_3\sim 0.92472,
~~\ga_4\sim 0.95359,~~\ga_5\sim 0.96856...\;.
\ee
We have 
$ \ga_k\sim 1-\fr4{(k\pi)^2}$,  for large $k$.
Further,  (\re{gras})
implies that
\be\la{ga-int}
\left.
\ba{rrl}
&\ga\in (0,~ \ga_1]~~&\!\!\!:
~ \mbox{no nonzero antisymmetric eigenfunctions}
\\
\\
&\ga\in ( \ga_1,~ \ga_3]&\!\!\!:
~ \mbox{exactly
one linearly independent antisymmetric eigenfunctions}
\\
\\
&\ga\in ( \ga_3,~ \ga_5]&\!\!\!:
~ \mbox{exactly
two linearly independent antisymmetric eigenfunctions}
\\
\\
&& ............................................................
\ea\right|
\ee
%%%%%%%%%%%%%%%%%%%%%%%%%%%%%%%%%%%%%%%%%%%%%%%%%%%%%%%%%%%%%%%%%%%%%
In particular, for $\ga\in ( \ga_1, \ga_3]$  we obtain the
eigenvalue $\lam_1\in (0,d)$ corresponding to 
the antisymmetric eigenfunction:
\be\la{lam1}
\lam_1=\lam_1(\ga)=\beta^2-b=\frac{\xi^2}{q^2}-b=\fr 1{ \ga}
\Big(\fr{\xi^2}{\arcsin^2\sqrt \ga}-1\Big)
=\fr 1{ \ga}\Big(\fr{\sin^2\xi}{1- \ga}-1\Big),
\ee
where $\xi$ is the solution to
\be\la{lam2}
\fr{\xi^2}{\sin^2\xi}=\ds\fr{\arcsin^2\sqrt{ \ga}}{1- \ga}.
\ee
%%%%%%%%%%%%%%%%%%%%%%%%%%%%%%%%%%%%%%%%%%%%%%%%%%%%%%%%%%%%%%%%%%
%%%%%%%%%%%%%%%%%%%%%%%%%%%%%%%%%%%%%%%%%%%%%%%%%%%%%%%%%%%%%%%%%%%
%%%%%%%%%%%%%%%%%%%%%%%%%%%%%%%%%%%%%%%%%%%%%%%%%%%%%%%%%%%%%%%
%%%%%%%%%%%%%%%%%%%%%%%%%%%%%%%%%%%%%%%%%%%%%%%%%%%%%%%%%%%%%%%
\subsection{Symmetric eigenfunctions}
%%%%%%%%%%%%%%%%%%%%%%%%%%%%%%%%%%%%%%%%%%%%%%%%%%%%%%%%%%%%%%%
Now we consider  symmetric eigenfunctions.
Equations  (\re{dei}) imply that the symmetric eigenfunctions
have the form
\be\la{symf}
\vp(x)=\left\{
\ba{ll}
B\cos\beta x&\!\!\!\!\!,~~~|x| \le  q,
\\
\\
A~e^{-\al |x|}&\!\!\!\!\!,~~~|x| > q,
\ea\right.
\ee
where
$\al>0$, $\beta\ge 0$, and
$\al^2=d-\lam$,  $\beta^2=b+\lam$.
Let us calculate the corresponding eigenvalues $\lam$.
Similarly (\re{ABeq})-(\re{xiet}),
denoting $\xi=\beta q$ and $\eta=\al q$, we obtain the system
\be\la{xet}
 \eta=\xi\tan\xi,~~~~~~\xi^2+\eta^2=R^2,
\ee
where $R=\ds\fr{\arcsin\sqrt{ \ga}}{\sqrt{1- \ga}}$.
\begin{figure}[!ht]
\vspace{0cm}
\centering
\hspace{0cm}
\includegraphics[width=0.7\textwidth]{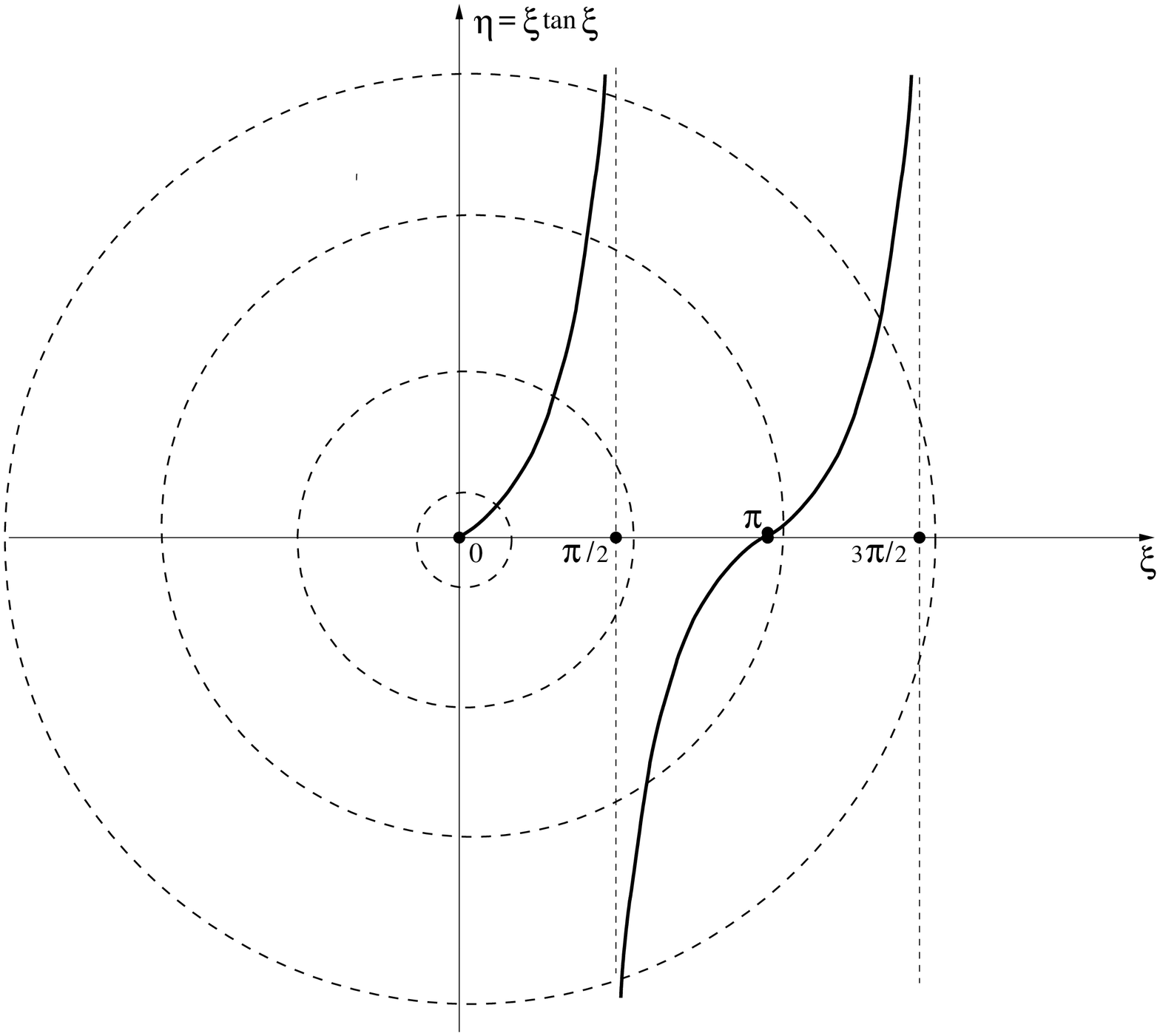}
\caption{}
\label{Picture2}
\end{figure}
The solutions to system (\re{xiet}) can be found grafically
(see Fig. 2). We have
\be\la{gras1}
\left.
\ba{rrl}
  &R\in\ds (0,~\pi]~&\!\!\!:
~ \mbox{exactly one solution to (\re{xet})}
\\
\\
& R\in (\pi, 2\pi]&\!\!\!:
~ \mbox{exactly two linearly independent symmetric eigenfunctions}
\\
\\
&& ............................................................
\ea\right|
\ee
Note that for any $\ga\in (0,1)$ equation (\re{xet}) has the solution
$\xi=\arcsin\sqrt \ga\in (0,\pi/2)$. The solution 
corresponds to eigenvalue $\lam=0$ and the first
symmetric eigenfunction.
Moreover,  (\re{gras1})
implies that
\be\la{ga1-int}
\left.
\ba{rrl}
&\ga\in (0,~ \ga_2]~~&\!\!\!:
~ \mbox{exactly
one linearly independent symmetric eigenfunctions}
\\
\\
&\ga\in ( \ga_2,~ \ga_4]&\!\!\!:
~ \mbox{exactly
two linearly independent symmetric eigenfunctions}
\\
\\
&& ............................................................
\ea\right|
\ee
where $\ga_i$ are defined in (\re{gak}).
\\
{\bf Conclusion:}\\
1) There is exactly one eigenvalue $\lam_0=0$ for $ \ga\in (0, \ga_1]$.\\
2) There are exactly two eigenvalues $\lam_0=0$ and $0<\lam_1<d$ for
$ \ga\in ( \ga_1, \ga_2]$.\\
etc.
\begin{figure}[!ht]
\vspace{0cm}
\centering
\hspace{0cm}
\includegraphics[width=0.7\textwidth]{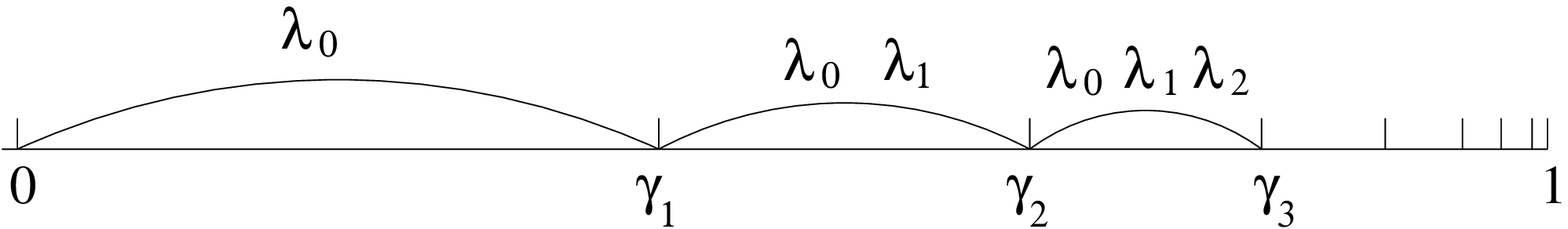}
\caption{Spectrum}
\label{Picture5}
\end{figure}
%%%%%%%%%%%%%%%%%%%%%%%%%%%%%%%%%%%%%%%%%%%%%%%%%%%%%%%%%%%%%%%%%%%
%%%%%%%%%%%%%%%%%%%%%%%%%%%%%%%%%%%%%%%%%%%%%%%%%%%%%%%%%%%%%%%%%%%
\setcounter{equation}{0}
\section{Spectral conditions} 
%%%%%%%%%%%%%%%%%%%%%%%%%%%%%%%%%%%%%%%%%%%%%%%%%%%%%%%%%%%%%%%%%%%%%
%%%%%%%%%%%%%%%%%%%%%%%%%%%%%%%%%%%%%%%%%%%%%%%%%%%%%%%%%%%%%%%%%%%%%%
We deduce  Theorems \re{t1} in Section \ref{PT} below from the
following proposition.
%%%%%%%%%%%%%%%%%%%%%%%%%%%%%%%%%%%%%%%%%%%%%%%%%%%%%%%%%%%%%%%%%%%%%%%%%
\bp\la{p1}
For any $\ga\in(\ga_1,\ga_2)$ 
the piece wise parabolic potentials $U_0$,
defined in (\re{pp}), satisfy  conditions
{\bf U1}- {\bf U3} except for
the smoothness condition at the points $\psi=\pm\ga$.
Condition {\bf U4} holds for  any $\ga\in(\ga_1,\ga_2)$
except for one point $\ga_*$.
\ep
%%%%%%%%%%%%%%%%%%%%%%%%%%%%%%%%%%%%%%%%%%%%%%%%%%%%%%%%%%%%%%%%%%%%%%%%%
\begin{proof}
%%%%%%%%%%%%%%%%%%%%%%%%%%%%%%%%%%%%%%%%%%%%%%%%%%%%%%%%%%%%%%%%%%%
{\it Step i)}
Obviously, for $U_0(\psi)$ condition {\bf U1} with $a=1$, $m^2=d$ 
and any integer $K\ge 3$ holds except
the smoothness  at the points $\psi=\pm \ga$.

Consider condition {\bf U2}.
Note that the solutions to (\re{xiet}) or (\re{xet}) with $\eta=0$
and $R=k\pi/2$, $k\in\N$
correspond to $\al=0$ i.e. $\lam=d$. Then the functions
(\re{deif}) and (\re{symf}) with $A\ne 0$
is a nonzero constant for $|x|\ge \ga$.
Hence, the function
is the {\it resonance} corresponding to the edge point
$\lam=d$ of the continuous spectrum.
Thus, the resonances exist only for the discrete set of parameters
$ \ga_k\in(0,1)$ defined in (\re{gak}).
Evidently, the set has just one limit point $1$.
Hence, conditions {\bf U2}
holds if  $ \ga\in ( \ga_1, \ga_2)$.
\medskip\\
{\it Step ii)}
For  any $ \ga\in ( \ga_1, \ga_2)$ 
the operator $H_0$ defined in (\re{VV})
has exactly two  eigenvalues $\lam_0=0$ and $\lam_1\in (0,d)$.
For  condition {\bf U3} it remains to verify  (\re{U2}) 
with $m^2=d$. Namely, due to (\re{lam1})-(\re{lam2})
we  must prove that for any  $\ga\in ( \ga_1, \ga_2)$
the inequality holds 
$$
\fr 4{ \ga}\Big(\fr{\sin^2\xi(\ga)}{1- \ga}-1\Big)>\fr 1{1- \ga},
$$
where $\xi(\ga)\in (\pi/2,\pi)$ is the solution to (\re{lam2}),

After the simple transformations we obtain
\be\la{eq1}
4\cos^2\xi(\ga)<3\ga,
\ee
and then
$$
\fr {\pi}2<\xi(\ga)< \pi-\arccos\fr{\sqrt{3 \ga}}2.
$$
Since $\ds\fr{\xi}{\sin\xi}$ is monotonically increasing function
for $\xi\in (\pi/2,\pi)$, then 
$$
\fr{\pi}2<\fr{\arcsin\sqrt{ \ga}}{\sqrt{1- \ga}}
<\fr{2(\pi-\arccos\fr{\sqrt{3 \ga}}2)}{\sqrt{4-3 \ga}}.
$$
Finally, we obtain
$$
 \ga_1< \ga< \al,
$$
where $ \al$ is the solution to
$$
\fr{\arcsin\sqrt{ \al}}{\sqrt{1- \ga_{0}}}
=\fr{2(\pi-\arccos\fr{\sqrt{3 \al}}2)}{\sqrt{4-3 \al}}.
$$
Numerical calculation gives 
$$
\al=0.921485>\ga_2.
$$
Therefore, condition {\bf U3} holds for any  $\ga\in(\ga_1,\ga_2)$.
\medskip\\
%%%%%%%%%%%%%%%%%%%%%%%%%%%%%%%%%%%%%%%%%%%%%%%%%%%%%%%%%%%%%%%%%%%%
{\it Step iii)}
Finally, consider condition {\bf U4} (Fermi Golden Rule).
The condition can be rewritten as
\be\la{FGR2}
  \int U_0'''(s_0(x))\vp_{4\lam_1}(x)\vp_{\lam_1}^2(x)dx
  =\int \fr{d}{dx}U_0''(s_0(x))\fr{\vp_{4\lam_1}(x)\vp_{\lam_1}^2(x)}
  {s'_0(x)}dx\ne 0.
\ee
By  (\re{VV}) we have that $U_0''(s_0(x))=W_0(x)$
is the piece wise constant function.
Hence,  
$$
\fr{d}{dx}U_0''(s_0(x))=(b+d)\de(x-q)-(b+d)\de(x+q),
$$ 
and (\re{FGR2}) becomes
$$
\vp_{4\lam_1}(q)\vp_{\lam_1}^2(q)\ne 0.
$$
Formula (\re{deif}) yields that $\vp_{\lam_1}(q)=Ae^{-\al q}\ne 0$.
Hence it is sufficient to verify that
$$
\vp_{4\lam_1}(q)\ne 0.
$$
The  eigenfunction $\vp_{4\lam_1}$ satisfies the equations
\be\la{4lam}
\left\{\ba{l}
-\vp_{4\lam_1}''(x)-b\vp_{4\lam_1}(x)=4\lam_1\vp_{4\lam_1}(x),~~~|x| \le q,\\
\\
-\vp_{4\lam_1}''(x)+d\vp_{4\lam_1}(x)=4\lam_1\vp_{4\lam_1}(x),~~~|x| > q.
\ea\right.
\ee
For the odd solution to (\re{4lam}) we have
$\vp_{4\lam_1}(x)=\sin\beta x$, $|x| \le  q$, where
$\beta^2=b+4\lam_1> 0$.
Therefore,
$$
\vp_{4\lam_1}(q)=\sin\beta q=0
$$
if either $\beta q=k\pi$, $k=0,1,2,..$, or
\be\la{sq}
\sqrt{1+4 \ga\lam_1( \ga)}\arcsin\sqrt \ga=k\pi,~~k=0,1,2,...
\ee
where $\lam_1( \ga)$ is defined in (\re{lam1})-(\re{lam2}).
Substituting $\lam_1( \ga)$ into
(\re{sq}) we obtain from (\re{lam1})-(\re{lam2})
\be\la{eqF}
\left\{\ba{l}
\ds\frac{\arcsin\sqrt{\ga}}{\sqrt{1-\ga}}
\sqrt{4\sin^2\xi-3(1-\ga)}=k\pi
\\\\
\ds\fr{\xi^2}{\sin^2\xi}=\ds\fr{\arcsin^2\sqrt{ \ga}}{1- \ga}.
\ea\right.
\ee
For $\ga\in (\ga_1,\ga_2)$
this system has a solution only for $k=1$ since
$$
0<\frac{\arcsin\sqrt{\ga}}{\sqrt{1-\ga}}
\sqrt{4\sin^2\xi-3(1-\ga)}<2\pi,\quad \ga_1<\ga<\ga_2
$$
Let us prove that (\re{eqF}) with $k=1$ has a unique solution.
Denote 
\be\la{d}
\theta=\arcsin\sqrt\ga\in (\pi\sqrt{1-\ga_1}/2,\pi\sqrt{1-\ga_2}).
\ee 
Then
(\re{eqF}) with $k=1$ is  equivalent to
\be\la{eqF1}
\left\{\ba{l}
4\xi^2-3\theta^2=\pi^2
\\\\
\ds\fr{\sin\xi}{\xi}=\ds\fr{\cos\theta}{\theta}
\ea\right.
\ee
The function 
$\theta_1(\xi):=\ds\fr{1}{\sqrt 3}\sqrt{4\xi^2-\pi^2}$
increases for 
$\xi(\ga_1)<\xi<\xi(\ga_2)$,
and
\be\la{d1}
\theta_1'(\xi)=\fr{1}{\sqrt 3}\fr{4\xi}{\sqrt{4\xi^2-\pi^2}}
>\fr{1}{\sqrt 3}\fr{4(\pi/2)}{\sqrt{4(3\pi/4)^2-\pi^2}}
=\fr{4}{\sqrt 15}>1,\quad\xi(\ga_2) <\xi<\xi(\ga_2)
\ee
since $\xi(\ga_1)=\pi/2$ and  $\xi(\ga_2)\sim 2.3137<3\pi/4$.\\
On the other hand, denote  $\theta_2:=\theta_2(\xi)$ the solution of 
$\ds\fr{\sin\xi}{\xi}=\ds\fr{\cos\theta}{\theta}$.
We have
\be\la{d2}
\theta_2'(\xi)=\fr{\sin\xi-\xi\cos\xi}{\xi^2}
\fr{\theta^2}{\cos\theta+\theta\sin\theta}>0,\quad\pi/2<\xi<\xi(\ga_2).
\ee
\begin{figure}[!ht]
\vspace{0cm}
\centering
\hspace{0cm}
\includegraphics[width=0.7\textwidth]{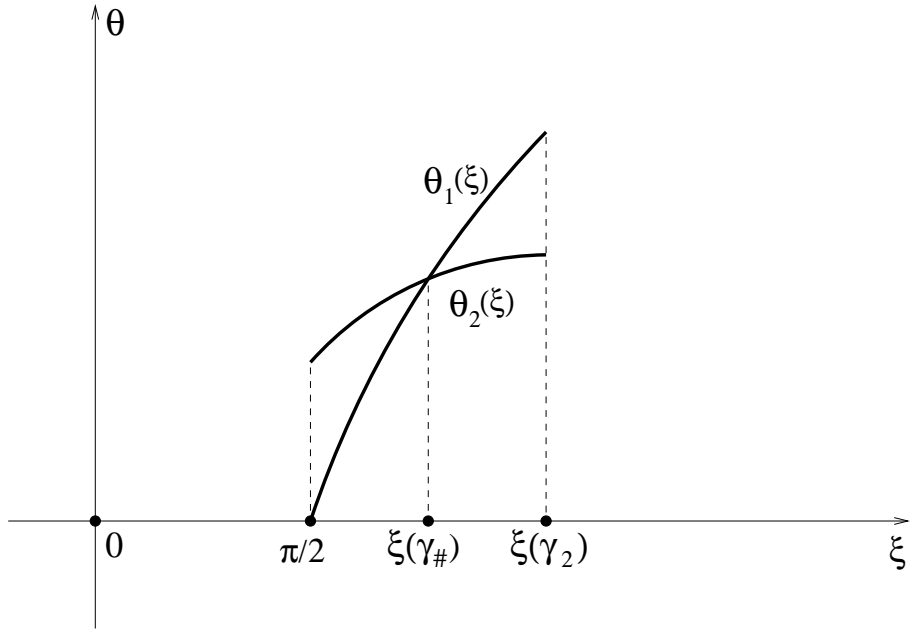}
\caption{Functions $\theta_1$ and $\theta_2$ }
\label{Fig3}
\end{figure}
Moreover, by (\re{d}) and  (\re{eqF1}) we obtain
\be\la{d3}
\theta_2'(\xi)=\fr{\theta}{\xi}\;\fr{\fr{\sin\xi}{\xi}-\cos\xi}
{\fr{\cos\theta}{\theta}+\sin\theta}
<\fr{\fr{\sin\xi}{\xi}-\cos\xi}
{\fr{\sin\xi}{\xi}+\sin\theta}<1,
\quad \pi/2<\xi<\xi(\ga_2)
\ee
since  
$|\cos\xi|<|\cos\xi(\ga_2)|<\sqrt 2/2$, and
$\sin\theta=\sqrt\ga>\sqrt\ga_1>\sqrt 2/2$ by (\re{gak-sol}).
Finally,
\be\la{d4}
\theta_2(\pi/2)>\theta_1(\pi/2)=0,
\quad
\theta_2(\xi(\ga_2))\sim 1.1843
<\theta_1(\xi(\ga_2))\sim 1.9616.
\ee
Therefore, (\re{d1})-(\re{d4}) imply that
$\theta_1(\theta)=\theta_2(\theta)$ for a single value 
$\xi(\ga_*)\in (\pi/2,\xi(\ga_2))$
(see. Figure 5).
Numerical calculation gives $\ga_*\sim 0.7925$.
Hence, system (\re{eqF}) on the interval $(\ga_1,\ga_2)$ 
has the  solution only for $\ga=\ga_*$. 
Thus, the Fermi Golden Rule holds for any $\ga\in (\ga_1, \ga_2)$
except the only point $\ga_*$.
\medskip\\
{\bf Conclusion:}\\
The potential $U_0(\psi)$ satisfies conditions {\bf U1}-{\bf U4}
except
the smoothness condition at the points $\psi=\pm \ga$
for any 
$\ga\in (\ga_1,\ga_*)\cup(\ga_*,\ga_2)$.
\end{proof}

%%%%%%%%%%%%%%%%%%%%%%%%%%%%%%%%%%%%%%%%%%%%%%%%%%%%%%%%%%%%%%%%%%%%%%%%%%%%%%
\setcounter{equation}{0}
\section{Smooth potentials}\la{PT}
%%%%%%%%%%%%%%%%%%%%%%%%%%%%%%%%%%%%%%%%%%%%%%%%%%%%%%%%%%%%%%%%%%%%%%%%%%%%
%%%%%%%%%%%%%%%%%%%%%%%%%%%%%%%%%%%%%%%%%%%%%%%%%%%%%%%%%%%%%%%%%%%%%%%%%%
We deduce Theorem \re{t1}  from Proposition \re{p1}
by an approximation of the potential (\re{VV}) with a smooth functions
satisfying conditions {\bf U1}-{\bf U4}.
Namely, let $h(\psi)\in C_0^\infty(\R)$
be an even  mollifying function with
the following properties:
\be\la{h}
h(\psi)\ge 0,\quad\supp h\subset[-1,1],\quad
\int h(\psi)d\psi=1.
\ee
For $\ve\in(0,1]$ we define the approximations 
\be\la{Ueps}
\ti U_\ve(\psi):=\fr1\ve\int h(\ds\fr{\psi-\psi'}{\ve})U_0(\psi')d\psi'.
\ee
 This is a smooth  even function, and
it is positive and symmetric w.r.t. points $\psi=\pm 1$
in a small neighborhood of these points
for $\ve<\ga$. 
More precisely, the difference 
$$
\ti U_\ve(\psi)-U_0(\psi)=
\left\{\ba{rl}
\mu_\ve>0,& |\psi|\ge \ga+\ve,\\
\\
-\nu_\ve<0,& |\psi|\le \ga-\ve,
\ea\right.
$$
where $\mu_\ve,\nu_\ve=\cO(\ve^2)$.
Let us set 
\be\la{Ueps1}
U_\ve(\psi)=\ti U_\ve(\psi)-\mu_\ve.
\ee 
Then
\be\la{dif2}
U_\ve(\psi)=
\left\{\ba{ll}
U_0(\psi),& |\psi|\ge \ga+\ve,\\
\\
U_0(\psi)-\mu_\ve-\nu_\ve,& |\psi|\le \ga-\ve.
\ea\right.
\ee
Obviously,
\be\la{UU2}
\sup_{\psi\in\R}|U_\ve(\psi)-U_0(\psi)|\le C\ve
\ee
with some constant $C$. Moreover,
\be\la{Ueps3}
U_\ve'''(\psi)\le0~~~~\mbox{for}~~\psi\le 0,~~~~~~~~
U_\ve'''(\psi)\ge0~~~~\mbox{for}~~\psi\ge 0.
\ee
The corresponding kinks  are the odd  solutions to the equation
$$
s_\ve''(x)-U_\ve'(s_\ve(x))=0,~~~~~~x\in\R.
$$
The equation can be integrated using the ``energy conservation''
$$
\fr{|s_\ve'(x)|^2}2-U_\ve(s_\ve(x))=\const,~~~~~~x\in\R
$$
with $\const=0$:
\be\la{ssolint}
\int_0^{s_\ve(x)}
\fr{ds}{\sqrt{2U_\ve(s)}}
=x,~~~~~~x\in\R.
\ee
Hence, $s_\ve(x)$ is a monotone increasing function, and
$$
s_\ve(x)\to \pm 1,~~~~~~~x\to\pm\infty.
$$
Moreover, (\re{dif2}), (\re{UU2}) and (\re{ssolint}) imply that
$$
\sup_{x\in\R}|s_\ve(x)-s_0(x)|\le C_1\ve.
$$
Therefore,
$$
||s_\ve(x)|-\ga|\ge \ve~~~~~~~~\mbox{for}~~~~~||x|-q|\ge \de
$$
where 
\be\la{del}
\delta\to 0~~~{\rm as}~~~\ve\to 0.
\ee
Hence,
$$
W_\ve(x):=U_\ve''(s_\eps(x))
=W_0(x)~~~~~~~~
\mbox{for}~~~~~||x|-q|\ge\de
$$
and
$$
W_\ve'(x)\le 0 ~~~~\mbox{for}~~x\le 0,~~~~~~~~
W_\ve'(x)\ge 0 ~~~~\mbox{for}~~x\ge 0
$$
by (\re{Ueps3}) (see Fig. \re{Fig2}).

\begin{figure}[!ht]
\vspace{0cm}
\centering
\hspace{0cm}
\includegraphics[width=0.8\textwidth]{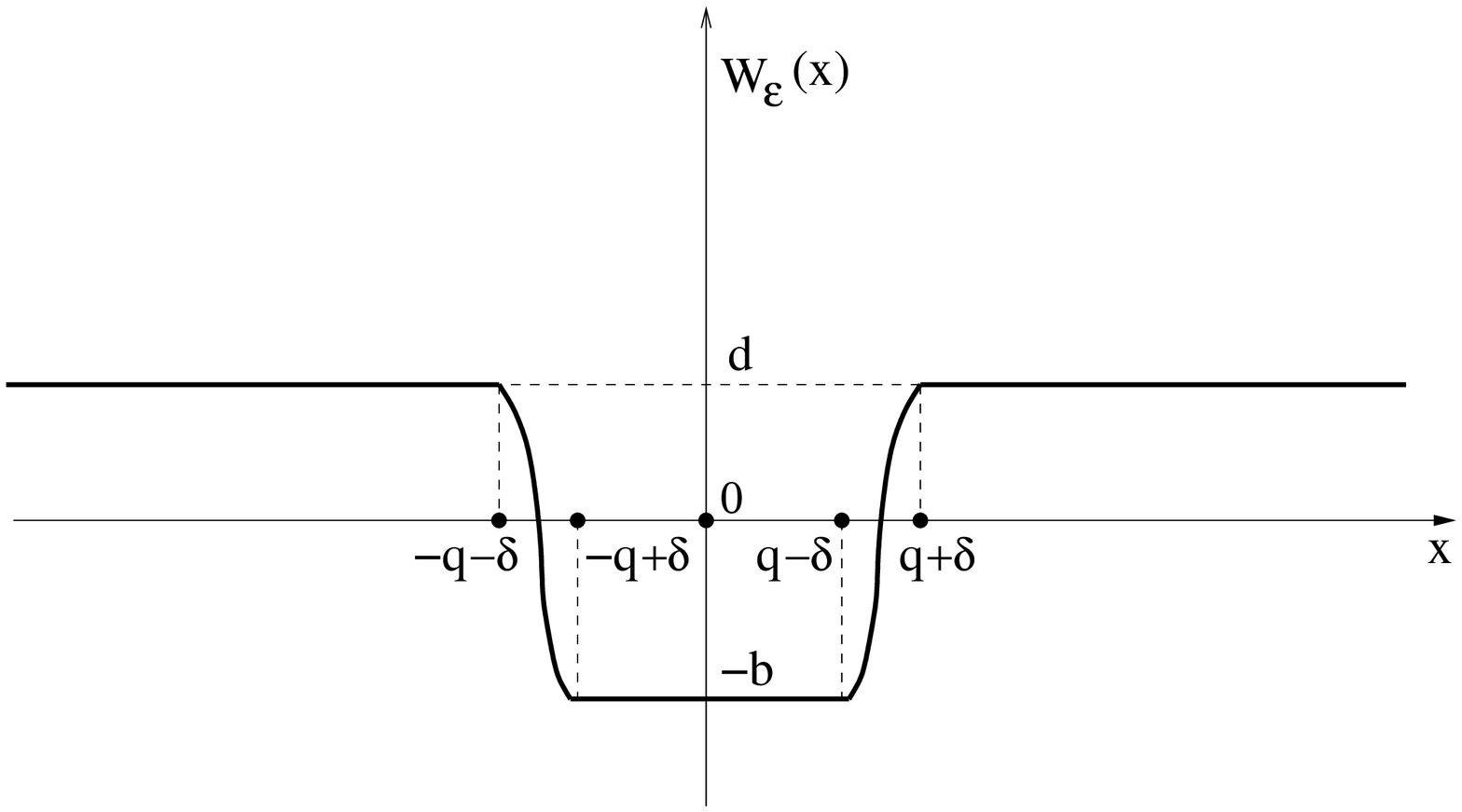}
\caption{Potential $W_\ve$}
\label{Fig2}
\end{figure}
As a result,
\be\la{w4}
W_\ve(x)-W_0(x)=0 ~~~~\mbox{for}~~||x|-q|\ge\de,
~~~~~~|W_\ve(x)-W_0(x)|\le b+d~~~~\mbox{for}~x\in\R.
\ee
Hence, denoting
$w_\ve(x)=W_\ve(x)-W_0(x)$, we obtain
\be\la{R5}
\Vert w_\ve\Vert_{L^2(\R)}\to 0,
~~~~~~\ve\to 0
\ee
by (\re{w4}) and (\re{del}).
%%%%%%%%%%%%%%%%%%%%%%%%%%%%%%%%%%%%%%%%%%%%%%%%%%%%%%%%%%%%%%%%%%%%%%%%%%
\begin{lemma}\la{ev}
The eigenvalues
of the Schr\"odinger operator
\be\la{Seps}
H_\ve=-\fr{d^2}{dx^2}+W_\ve(x)
\ee
converge to the ones of $H_0$ as $\ve\to 0$.
\end{lemma}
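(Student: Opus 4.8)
The plan is to upgrade the $L^2$-smallness \re{R5} of the difference $w_\ve=W_\ve-W_0$ to norm-resolvent convergence of $H_\ve$ to $H_0$, and then invoke the standard fact that norm-resolvent convergence forces the discrete spectrum lying in a gap of the common essential spectrum to converge eigenvalue by eigenvalue, with multiplicities preserved. Since $w_\ve$ is a multiplication operator supported in the shrinking set $\{\,||x|-q|\le\de\,\}$ and bounded by $b+d$, both $w_\ve$ and $W_0-d$ are relatively compact with respect to $-d^2/dx^2$; hence by Weyl's theorem $\si_{ess}(H_\ve)=\si_{ess}(H_0)=[d,\infty)$ for every $\ve$. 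In particular the two eigenvalues of interest, $\lam_0=0$ and $\lam_1\in(0,d)$, lie in a fixed compact subinterval of the gap $[0,d)$, exactly where the convergence statement applies.

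To obtain the norm-resolvent convergence I would first record a uniform lower bound. From \re{w4} one has $W_\ve\ge W_0-(b+d)\ge -2b-d$, so every $H_\ve$ is self-adjoint with spectrum contained in $[-2b-d,\infty)$. Fixing a real $z=-M$ with $M>2b+d$ places $z$ in the resolvent set of all the operators simultaneously and yields the uniform estimate $\Vert(H_\ve-z)^{-1}\Vert\le (M-2b-d)^{-1}$. The second resolvent identity reads
\be
(H_\ve-z)^{-1}-(H_0-z)^{-1}=-(H_\ve-z)^{-1}\,w_\ve\,(H_0-z)^{-1},
\ee
so in view of the uniform bound it suffices to prove $\Vert w_\ve(H_0-z)^{-1}\Vert_{L^2\to L^2}\to0$.

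The key step, and the one where the special shape of $w_\ve$ is used, is the following estimate. For $f\in L^2(\R)$ the function $g=(H_0-z)^{-1}f$ lies in $H^2(\R)$ with $\Vert g\Vert_{H^2}\le C_z\Vert f\Vert_{L^2}$, and the one-dimensional Sobolev embedding $H^1(\R)\hookrightarrow L^\infty(\R)$ gives $\Vert g\Vert_{L^\infty}\le C\Vert g\Vert_{H^2}$. Therefore
\be
\Vert w_\ve\, g\Vert_{L^2}\le \Vert w_\ve\Vert_{L^2}\,\Vert g\Vert_{L^\infty}\le C_z'\,\Vert w_\ve\Vert_{L^2}\,\Vert f\Vert_{L^2},
\ee
so that $\Vert w_\ve(H_0-z)^{-1}\Vert\le C_z'\Vert w_\ve\Vert_{L^2}\to0$ by \re{R5}. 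Combined with the uniform resolvent bound, the identity above gives $\Vert(H_\ve-z)^{-1}-(H_0-z)^{-1}\Vert\to0$, i.e. norm-resolvent convergence.

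I expect the main obstacle to be precisely this last estimate, because the perturbation $w_\ve$ is \emph{not} small in the sup-norm: it is of order $b+d$ on its support, so the elementary bound $\Vert w_\ve(H_0-z)^{-1}\Vert\le\Vert w_\ve\Vert_{L^\infty}\Vert(H_0-z)^{-1}\Vert$ is useless and a naive Neumann-series argument fails. Only the $L^2$-norm of $w_\ve$ is small, and it is the smoothing of the resolvent (the embedding $H^2\subset L^\infty$ in one dimension) that converts this $L^2$-smallness into operator-norm smallness. Once norm-resolvent convergence is established, the convergence of the eigenvalues in $[0,d)$, together with their multiplicities, follows from the convergence of the spectral projections $\mathbf 1_{(a,b)}(H_\ve)\to\mathbf 1_{(a,b)}(H_0)$ for any $a,b\notin\si(H_0)$ with $b<d$, which is the standard consequence of norm-resolvent convergence and completes the proof.
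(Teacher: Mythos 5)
Your proof is correct, and at its core it uses the same perturbative mechanism as the paper: the difference $w_\ve=W_\ve-W_0$ is treated through a resolvent identity together with the $L^2$-smallness (\re{R5}). The paper's own proof, however, is only a two-line sketch: it observes that eigenvalues are poles of the resolvents and asserts that the lemma ``follows from (\re{R5})'' via the factorization $R_\ve(\om)=R_0(\om)(1+w_\ve R_0(\om))^{-1}$ in (\re{R1}). What the paper never explains is precisely the point you single out as the crux: since $\Vert w_\ve\Vert_{L^\infty}$ is of order $b+d$ and does \emph{not} tend to zero, the $L^2$-smallness of $w_\ve$ does not by itself make $w_\ve R_0(\om)$ small in operator norm, so the Neumann inversion of $1+w_\ve R_0(\om)$ needs justification. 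Your estimate $\Vert w_\ve(H_0-z)^{-1}\Vert_{L^2\to L^2}\le C_z\Vert w_\ve\Vert_{L^2}$, obtained from elliptic regularity and the one-dimensional embedding $H^1(\R)\hookrightarrow L^\infty(\R)$, is exactly the missing ingredient that converts (\re{R5}) into operator-norm smallness; from there your route (norm-resolvent convergence at one point, hence everywhere; Weyl's theorem giving $\si_{ess}(H_\ve)=\si_{ess}(H_0)=[d,\infty)$; convergence of spectral projections, hence of the eigenvalues $\lam_0,\lam_1$ in the gap $[0,d)$ with multiplicities) is the standard and fully rigorous way to conclude what the paper phrases loosely as ``the poles converge.'' In short, your proposal is a completion of the paper's sketch rather than a different method, and it buys a genuinely airtight argument where the paper relies on the reader to fill in the same estimates.
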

%%%%%%%%%%%%%%%%%%%%%%%%%%%%%%%%%%%%%%%%%%%%%%%%%%%%%%%%%%%%%%%%%%%%%%%%
\begin{proof}
The eigenvalues of  $H_0$ and $H_\ve$
are the poles of the  resolvents $R_0(\om)=(H_0-\om)^{-1}$
and $R_\ve(\om)=(H_\ve-\om)^{-1}$  respectively.
Hence, the lemma follows from (\re{R5}) due to the relation 
\be\la{R1}
R_\ve(\om)=(H_0-\om+w_\ve)^{-1}=R_0(\om)(1+w_\ve R_0(\om))^{-1}.
\ee
\end{proof}
\hspace{-0.6 cm}{\bf Proof of Theorem \re{t1}}
Consider the potential $U(\psi)=U_\ve(\psi)$ 
defined in (\re{Ueps})-(\re{Ueps1}).
Let us prove that there exist $\ve_0>0$ such that 
for any $\ga\in (\ga_1, \ga_2)$, and $0<\ve<\ve_0$
the potential $U_\ve$ satisfies conditions {\bf U1}- {\bf U4}.
\medskip\\
{\it Step i)}
Condition {\bf U1} with $a=1$, $m^2=d$ and any integer $K\ge 3$ 
obviously holds.
\medskip\\
{\it Step ii)}
For $\si\in\R$, and $s=0,1,2,...$ denote by $\cH^s_\si=\cH^s_\si (\R)$
the weighted Sobolev spaces with the finite norms
$$
\Vert\psi\Vert_{\cH^s_\si}=\sum\limits_{k=0}^{s}
\Vert (1+|x|)^{\si}\psi^{(k)}\Vert_{L^2(\R)}<\infty,
$$ 
By \ci[Theorem 7.2]{M},
the absence of the resonance at the point $\om=d$
for the Schr\"odinger operator $H_\ve$
is equivalent  to the boundedness
of the corresponding resolvent $R_\ve(\om):\cH^{0}_\si\to \cH^2_{-\si}$ 
at  $\om=d$ for any $\si>1/2$.
Hence, the resolvent
$R_0(d):\cH^{0}_\si\to \cH^2_{-\si}$ is bounded by Proposition \re{p1}.
Further, (\re{w4}) imply
$$
\Vert w_\ve\Vert_{\cH^0_{-\si}\to \cH^0_{\si}}\to 0,\quad\ve\to 0
$$
Hence, for sufficiently small $\ve$ the operator
$R_\ve(d):\cH^{0}_\si\to \cH^2_{-\si}$ is bounded  by (\re{R1}).
Then condition  {\bf U2} holds for  $U_\ve$.  
\medskip\\
{\it Step iii)} 
Lemma \re{ev} implies that for $\ga\in (\ga_1, \ga_2)$  
and sufficiently small $\ve$  
the operator $H_\ve$ has exactly two eigenvalues 
$\lam_0=0$ and $0<\lam_1(\ve)<d$. Moreover, 
$\lam_1(\ve)\to\lam_1(0)=\lam_1$ as $\ve\to 0$ and then
$4\lam_1(\ve)>d$ for sufficiently small $\ve$. 
Hence, condition {\bf U3} holds.
\medskip\\
{\it Step iv)}
It remains to check  condition {\bf U4}. Denote
$\vp^\ve_{\lam_1(\ve)}$ and $\vp^{\ve}_{4\lam_1(\ve)}$
the corresponding odd eigenfunctions of $H_\ve$. 
Then
we have 
\beqn\nonumber
 && \int U_\ve'''(s_\ve(x))\vp^\ve_{4\lam_1(\ve)}(x)
 (\vp^\ve_{\lam_1(\ve)}(x))^2dx
 =\int\limits_{|x-q|\le\de} \fr{d}{dx}W_\ve(x)
  \fr{\vp^{\ve}_{4\lam_1(\ve)}(x)(\vp^\ve_{\lam_1(\ve)}(x))^2}
  {s'_\ve(x)}dx\\
 \nonumber\\
\nonumber
 &&=\sum\limits_{\pm}\Big(d
 \fr{\vp^\ve_{4\lam_1(\ve)}(\pm q+\de)
 (\vp^\ve_{\lam_1(\ve)}(\pm q+\de))^2}
  {s'_\ve(\pm q+\de)}+b
  \fr{\vp^\ve_{4\lam_1(\ve)}(\pm q-\de)
  (\vp^\ve_{\lam_1(\ve)}(\pm q-\de))^2}
  {s'_\ve(\pm q-\de)}\Big)\\
\nonumber\\
\nonumber
&&-\int\limits_{|x-q|\le\de}W_\ve(x) \fr{d}{dx}
  \fr{\vp_{4\lam^\ve_1(\ve)}(x)
  (\vp^\ve_{\lam_1(\ve)}(x))^2}{s'_\ve(x)}dx\\
\nonumber\\
\nonumber
&&  \ba{c}
~~~\\
\longrightarrow\\\ve\to 0
\ea 2(d+b)\fr{\vp_{4\lam_1}(q)\vp_{\lam_1}^2(q)}{s'_0(q)}
=\int U_0'''(s_0(x))\vp_{4\lam_1}(x)\vp_{\lam_1}^2(x)dx
\not = 0
\eeqn
since $\de\to 0$ as $\ve\to 0$. 
Hence, condition {\bf U4} holds for sufficiently small $\ve$.
$\hfill\Box$

\end{document}